\pgfplotsset{compat=1.16}
\newcommand{\R}{\mathbb{R}}
\newcommand{\N}{\mathcal{N}}
\newcommand{\E}{\mathbb{E}}
\newcommand{\zh}{\hat{z}}
\newcommand{\Ec}{\mathcal{E}}
\newtheorem{problem}{Problem}
\newtheorem{remark}{Remark}
\newtheorem{example}{Example}
\newtheorem{theorem}{Theorem}
\renewcommand{\AA}[1]{#1}
\title{Optimal Intermittent Particle Filter}
\author{Antoine ASPEEL, Amaury GOUVERNEUR, Rapha\"el M. JUNGERS and Benoit MACQ\thanks{A.A. is supported by the Walloon Region, under grant RW-DGO6-Biowin-Bidmed. R.J. is a FNRS Research Associate. He is also supported by the Walloon Region, the Innoviris Foundation, and the FNRS. This project has received funding from the European Research Council (ERC) under the European Union’s Horizon 2020 research and innovation programme under grant agreement No 864017 - L2C.}
\thanks{ICTEAM Institute, UCLouvain, Avenue Georges Lemaître 4-6, 1348 Louvain-la-Neuve, Belgium}}
\begin{document}
%
\maketitle

\IEEEpeerreviewmaketitle


\begin{abstract} 
The problem of the optimal allocation (in the expected mean square error sense) of a measurement budget for particle filtering is addressed. We propose three different optimal intermittent filters, \AA{whose optimality criteria depend on the information available at the time of decision making}. For the first, the stochastic program filter, the measurement times are given by \AA{a policy that determines whether a measurement should be taken based on the measurements already acquired.} The second, called the offline filter, determines all measurement times at once by solving a combinatorial optimization program before any measurement acquisition. For the third one, which we call online filter, each time a new measurement is received, the next measurement time is recomputed to take all the information that is then available into account. We prove that in terms of expected mean square error, the stochastic program filter outperforms the online filter, which itself outperforms the offline filter. \AA{However, these filters are generally intractable. For this reason, the filter estimate is approximated by a particle filter. Moreover, the mean square error is approximated using a Monte-Carlo approach, and different optimization algorithms are compared to approximately solve the combinatorial programs (a random trial algorithm, greedy forward and backward algorithms, a simulated annealing algorithm, and a genetic algorithm).} Finally, the performance of the proposed methods is illustrated on two examples: a tumor motion model and a common benchmark for particle filtering.

\end{abstract}
\begin{IEEEkeywords}
Optimal measurement times, Particle filtering, Sequential Monte Carlo methods, Sparse measurements, Genetic algorithm.
\end{IEEEkeywords}

\section{Introduction}
\IEEEPARstart{S}{tochastic} nonlinear dynamical systems have shown their ability to model a number of real-world problems \cite{arnold2012stochastic, adomian1988nonlinear, freund2000stochastic}. Particle filtering is a popular approach to estimate the state of such systems from a set of noisy measurements \cite{arulampalam2002tutorial}. This tool has been largely used, among others, in computer vision \cite{hue2001particle,cho2006real,andreasson2005localization}, positioning and navigation \cite{gustafsson2002particle,gustafsson2010particle}, chemistry \cite{shenoy2011practical,jha2016particle}, mechanics \cite{cadini2009monte}, robotics \cite{thrun2002particle}, and medicine \cite{rahni2011particle}. In practice, performing measurements may be difficult due to energy consumption, economic constraints, or health hazards. For instance, in tumor tracking based on X-ray images, the number of X-ray acquisitions should be minimized in order to limit patients' exposure to harmful radiation \cite{sharp2004prediction}.

Under such constraints, the problem is to select the best times to measure the system. In other words, one has a measurement budget and has to choose when to acquire measurements. The optimality criterion is to minimize the expected filtering \textit{mean square error} (MSE) over the complete time horizon.

Intuitively, there are various reasons why regular measurement times may be suboptimal. For example, if the system is (almost) deterministic but there is a large uncertainty in the initial state, it may be better to take measurements as early as possible. However, if the signal amplitude varies greatly over time, it may be worthwhile to take measurements when the signal-to-noise ratio is greatest.


First, the problem of optimal measurement budget allocation is formalized as a nonlinear multistage \emph{stochastic program}. The measurement times are selected one after the other and each measurement time is selected taking the measurements already observed into account.

Next, an easier related problem that we call the \emph{offline program} is presented. It corresponds to selecting all the measurement times at once and offline, i.e., before any measurement acquisition. In the case of real-time applications where the time between each measurement is short, being able to calculate the measurement times offline can be decisive. However, filtering a signal using the measurement times given by this offline program gives, on average, poorer filtering performance because in that case, the information present in the measurements already acquired is ignored.

Then, we develop an online adaptation of the Offline problem which we call the \emph{online program}. This gives better filtering performance, at the cost of a higher computational cost, as computations must be carried out online.

\subsection{Related work}
In recent decades, the problem of estimating the state of a system subject to missing measurements has been extensively studied in both the linear case (see e.g., \cite{sun2008optimalElsevier, rutledge2020finite, liang2010optimal} and references there in) and the nonlinear case (see e.g., \cite{li2016particle, chen2010nonlinear, yang2019particle} and references there in). The latter use the particle filtering framework. In these previous studies, missing measurements occur randomly. In this paper, on the contrary, we aim to choose the measurement times optimally. The measurement times (or equivalently, the missing measurement times), are, for our case, parameters to be optimized.

The \emph{sensor scheduling} (or sensor management) \emph{problem} is closely related to ours and has been widely studied. It consists in choosing at each time step one (or a few) active sensors among a set in order to minimize the variance of the estimation error \AA{\cite{vitus2012efficient, krishnamurthy2002algorithms, garnett2010bayesian, yang2018adaptive,  singh2003stochastic, singh2007simulation, doucet2002particle, singh2004variance, he2006sensor, li2009approximate}}.

\AA{In some of the works on the sensor scheduling problem \cite{krishnamurthy2002algorithms, yang2018adaptive, he2006sensor, li2009approximate},} a cost is (or can be) associated to the use of each sensor and the objective is to minimize a combination of the error variance and the cost of the sensors. \AA{In this case, we speak of the sensor scheduling problem \emph{with a sensing cost}.}

\AA{The problem studied in this article is a constrained version of the sensor scheduling problem without sensing cost. Indeed, consider the sensor scheduling problem where an active sensor must be selected among two available sensors, one corresponding to a measurement acquisition and the other corresponding to no measurements, i.e., it returns a value independent of the system state. By adding a constraint on the maximum number of uses of the measuring sensor, we find the problem of optimal allocation of a measurement budget.}

\AA{In addition, the problem we address can be reduced to the sensor scheduling problem with a sensing cost in the following way. Consider the sensor scheduling problem with a sensing cost with two available sensors: the first one is equivalent to not measuring (this sensor is costless but returns a value independent of the system state); the second sensor has a cost $c>0$. Then by choosing a sensing cost $c$ such that the obtained solution respects the measurement budget (e.g., by using a binary search on $c$), one can reduce our constrained problem to a sensor scheduling problem with a sensing cost.}



In the linear and Gaussian case, \cite{vitus2012efficient} proposes two algorithms to solve the sensor scheduling problem. The first one provides an optimal solution but can be computationally expensive. The second one provides a suboptimal solution and depends on a tuning parameter to make a trade-off between the quality of the solution and the simplicity of the problem. \AA{In that setting, the estimate is given by the Kalman filter. Using the monotonicity and concavity of the Riccati equation, a condition for the non-optimality of the initialization of a schedule is derived. The exact algorithm uses this condition to prune the search tree of all possible sensor schedules. On the other hand, the suboptimal algorithm uses a relaxed condition, and the optimality gap is proven to be upper bounded by a linear function of the relaxation parameter. Given the central role played by the Riccati equation in the proposed method, this approach seems difficult to generalize to the nonlinear and non-Gaussian case.} For a literature review on the sensor scheduling problem in the linear and Gaussian case, see references in \cite{vitus2012efficient}.


\AA{The sensor scheduling problem with a sensing cost has received attention when the state space is a finite set, i.e., the dynamics is a finite hidden Markov model (HMM) \cite{krishnamurthy2002algorithms}. In that article, the estimate is computed using a HMM state filter and the optimal sensor scheduling policy is obtained by stochastic dynamic programming. They propose two efficient methods to obtain a suboptimal solution: the first one is a one-step look-ahead approximation, the other one is based on Lovejoy's algorithm for partially observable Markov decision processes. The proposed methods rely on the fact that the state space and the measurement space are finite sets, which is not the case in the problem we study.}

In the case of a continuous state space, \cite{garnett2010bayesian} proposes a Gaussian process global optimization method used as a black-box optimizer to dynamically select sensors in order to minimize the error at the end of a \AA{time} horizon. \AA{The approach is model free (the system dynamics is unknown) and only the last measurement is used to estimate the state of the system, the previous measurements being used to choose the last active sensor. On the contrary, in our work, the (known) system dynamics is used to choose when to measure. Another difference is that we want to minimize the average MSE.}

\AA{For nonlinear but Gaussian dynamics, a sparsity promoting approach is proposed in \cite{yang2018adaptive} to choose a set of active sensors at each time step. A trade-off is made between the number of active sensors and the current posterior Cramér-Rao lower bound (PCRLB). This is justified by the fact that the PCRLB is a lower bound on the mean square estimation error. The main differences with our approach are that we tackle the general case of a non-Gaussian dynamics, we do not consider a one-step look-ahead approximation, and we directly minimize the MSE instead of the PCRLB.}

The sensor scheduling problem has also been studied when the set of sensors is continuous (e.g., to optimize the spatial position of a mobile sensor). For this problem, \cite{singh2003stochastic, singh2007simulation} propose an algorithm using a simulation-based gradient approximation. \AA{Since the core of the method is based on the gradient of the objective function with respect to the selected sensor, this approach is not applicable when the set of available sensors is finite.}


\AA{For the sensor scheduling problem in the nonlinear and non-Gaussian case, an optimal one-step look-ahead policy is proposed in \cite{doucet2002particle, singh2004variance} to maximize the Kullbach Leibler divergence between filtering and prediction densities.} Intuitively, it maximizes the information obtained at the current measurement time. \AA{Again, the main difference with this work is that we do not consider the one-step look-ahead approximation, and we minimize the average MSE.}


\AA{In \cite{he2006sensor, li2009approximate}, the sensor scheduling problem with a sensing cost for nonlinear and non-Gaussian dynamics is formalized as a continuous partially observable Markov decision process. The $Q$-function is approximated by the policy rollout method. This requires a base policy, i.e., a heuristic, to choose future actions in order to estimate the cumulated costs that will follow the next action. For example, in \cite{he2006sensor}, the available sensors are spatially distributed to track a moving target and the base policy selects the sensor closest to the estimated position of the target. As the authors acknowledge ``The choice of a base policy may have a significant impact on the performance of the rollout policy.'' On the contrary, our method does not require a priori knowledge of a heuristic to select the measurement times.}

\AA{Table \ref{tab:intro:sota:sensorScheduling} summarizes the state of the art concerning the sensor scheduling problem.}

\begin{table*}
\AA{
\caption{\AA{Summary of the state of the art of the sensor scheduling problem. When the sensors are chosen by a one-step look-ahead policy, the mention ``(one step)'' is indicated in the column ``Objective function''.}}
\label{tab:intro:sota:sensorScheduling}
\centering
\begin{tabular}{|l|l|l||l|}
\hline
Dynamics & Sensing cost & Objective function & References \\
\hline
\hline
linear and Gaussian & no & average MSE & \cite{vitus2012efficient} \\
\hline
hidden Markov model & yes & average MSE & \cite{krishnamurthy2002algorithms}\\
\hline
model-free & no & terminal root MSE & \cite{garnett2010bayesian}\\
\hline
nonlinear and Gaussian & yes & posterior Cram\'er-Rao lower bound (one step) & \cite{yang2018adaptive} \\
\hline
\multirow{3}{*}{nonlinear and non-Gaussian}
            & \multirow{2}{*}{no}
                & average MSE & \cite{singh2003stochastic} \cite{singh2007simulation} \\
                \cline{3-4}
                && Kullback-Leibler divergence between prediction and filtering densities (one-step) & \cite{doucet2002particle} \cite{singh2004variance} \\
            \cline{2-4}
            & yes & average MSE & \cite{he2006sensor} \cite{li2009approximate} \\
\hline
\end{tabular}
}
\end{table*}



In the particular case of linear systems subject to Gaussian noise processes, the selection of optimal measurement times over a finite time horizon has been studied using the Kalman filtering framework in both discrete \cite{aspeel2019optimal} and continuous-time \cite{sano1970measurement, aksenov2019optimal} settings.

\subsection{Contribution}

This paper addresses the problem of optimal allocation of a measurement budget in the discrete-time nonlinear case with disturbance and measurement noise processes following arbitrary distributions.


In \cite{aspeel2020optimal}, we have proposed an initial intermittent filter (which is reported as Problem \ref{prob:offline} here). Our contributions in the present paper are that (i) we show that the problem of optimal intermittent filtering can be modeled as a stochastic program (see Problem \ref{prob:stochProg}); (ii) based on this, we propose a recursive version of optimal intermittent filter (see Problem \ref{prob:online}); (iii) we prove the Theorem \ref{thm:SP<online<offline} that states that the expected mean square filtering error is smaller for the stochastic program filter than for the online filter, which itself is smaller than for the offline filter; (iv) new optimization algorithms are tested (greedy forward, greedy backward, and simulated annealing); and (v) the results are presented (among others) on a new model inspired by real-world tumor tracking applications.

Overall, the two main contributions of this paper are to propose an efficient algorithm to solve the problem of optimal measurement times selection; and to show the interest of intermittent measurements in particle filtering.

\subsection{Paper outline}

The rest of this paper is organized as follows: Section \ref{sec:materianlsAndMethods} presents how to define filtering with intermittent measurements (Subsection \ref{sec:IPF}); different versions of optimal intermittent filters (Subsection \ref{sec:optimalIPF}) and how to compute them numerically with different optimization algorithms (subsections \ref{sec:numericalApprox} and \ref{sec:optimizationAlgo}). Results and discussion are presented in Section \ref{sec:resultsAndDiscussion}, where a model of tumor motion is proposed for benchmarking (Subsection \ref{sec:tumorModel}), in addition to a common benchmark for particle filters (Subsection \ref{sec:toy_ex}), the optimization algorithms are compared (Subsection \ref{sec:compareOptimizationAlgo}), and the filtering performance of the offline and online particle filters are compared with a regular particle filter (Subsection \ref{sec:filteringPerformance}). Finally, Section \ref{sec:conclusion} concludes and discusses possible improvements and perspectives.

A \textsc{Matlab} (MathWorks, Natick, Massachusetts, USA) implementation of the presented algorithms and the code that generate the figures is available on \textsc{GitHub} at\\
\href{https://github.com/AmauryGouverneur/Optimal_Measurement_Times_For_Particle_Filtering}{\textsc{github.com/AmauryGouverneur/Optimal\_\\ Measurement\_Times\_For\_Particle\_Filtering}}.

\section{Materials and methods}\label{sec:materianlsAndMethods}

\subsection{Intermittent filter}\label{sec:IPF}
A discrete time stochastic nonlinear dynamic system describes the evolution of a state $x(t)$ over the finite time horizon $t=0,\dots,T$. One wants to estimate a quantity $z(t)$ related to $x(t)$ and has access to previously acquired noisy measurements $y(t)$ of $x(t)$. Measurements are not available at each time step. More precisely, only $N$ measurements are available, at times $t_1,\dots,t_N\in\{0,\dots, T\}$. This is modelled as
\begin{alignat}{3}
x(t+1) &=&&\ f_t(x(t),w(t))  &\text{\ \ for\ \ } & t=0,\dots,T-1, \label{eq:model:x}\\
y(t_j)   &=&&\ g_{t_j}(x(t_j),v(t_j))  &\text{\ \ for\ \ } & j=1,\dots,N, \label{eq:model:y}\\
z(t)   &=&&\ h_t(x(t)) &\text{\ \ for\ \ } & t=0,\dots,T,\label{eq:model:z}\\ 
x(0)   &\sim&&\ \mathcal{F},\label{eq:model:x0}
\end{alignat}
with $t_i\neq t_j$ if $i\neq j$, and $x(t)\in\R^n$, $y(t)\in\R^m$ and $z(t)\in\R^p$. In addition, $w(t)$ and $v(t)$ are random processes with known probability density functions. Functions $f_t(\cdot,\cdot)$, $g_t(\cdot,\cdot)$ and $h_t(\cdot)$ are known and have compatible dimensions. The initial state $x(0)$ follows a known distribution $\mathcal{F}$. 

For instance, in a tumor tracking problem based on X-ray images, $x(t)\in\R^6$ can be a state vector containing the tumor’s position and velocity in 3-dimensional space, $y(t)\in\R^2$ can be the 2-dimensional projection of the target and $z(t)\in\R^3$ the position of the mass center in 3-dimensional space.


Let us introduce some notations and definitions. We use the so-called Matlab notation, for $j\leq k$, $t_{j:k}\coloneqq \{t_j, t_{j+1},\dots,t_k\}$ and $y(t_{j:k})\coloneqq \{y(t_j),y(t_{j+1}),\dots,y(t_k)\}$.

We define the \emph{intermittent filter} estimate,
\begin{align}\label{eq:def:zh}
\zh(t|y(t_{1:j}))\coloneqq
\E_{x(0),w(0),\dots,w(t-1)}[z(t)|y(t_k),\forall t_k\leq t],
\end{align}
where $\E_X[\cdot]$ holds for the expectation operator according to the random variable $X$. The name “intermittent” emphasizes that measurements are not accessible at each time step. Subsection \ref{sec:numericalApprox:estimate} will explain how an intermittent particle filter can be used to calculate an approximation of the intermittent filter estimate.

Let's define the expected filtering error variance,
\begin{align}
&\Ec[t|y(t_{1:j})]\coloneqq \notag\\
&\E_{x(0),w(0),\dots,w(t-1)}[\|z(t)-\zh(t|y(t_{1:j}))\|^2|y(t_k),\forall t_k\leq t], \label{eq:def:Ec}
\end{align}
where $\|\cdot \|$ is the Euclidean norm. A numerical approach to estimate this quantity is described in Paragraph \ref{sec:MC}.

\begin{remark}\label{remark:notation}
\AA{Thanks to the condition $\forall t_k\leq t$ in definitions (\ref{eq:def:zh}) and (\ref{eq:def:Ec}), it follows that if $t_k\leq t<t_{k+1}\leq t_j$, it holds that $\hat{z}(t|y(t_{1:j})) = \hat{z}(t|y(t_{1:k}),y(t_{k+1:j})) = \hat{z}(t|y(t_{1:k}))$, and $\Ec[t|y(t_{1:j})] = \Ec[t|y(t_{1:k}),y(t_{k+1:j})] = \Ec[t|y(t_{1:k})]$.} This ensures the causality of the filter by avoiding future measurements of influencing the present estimation. Otherwise, these two quantities are random variables because they depend on $y(t_1),\dots,y(t_j)$ and therefore on $v(t_1),\dots,v(t_j)$, which are random.
\end{remark}

\begin{remark}\label{remark:estimator}
Instead of the estimator of $z(t)$ defined in \eqref{eq:def:zh}, one could use any estimator that can be defined from the distribution of $x(t)|(y(t_k),\forall t_k\leq t)$, for example, the \emph{maximum a posteriori estimator} could replace \eqref{eq:def:zh} using the method from \cite{saha2009particle}. This flexibility is allowed because we will use a particle filter which estimates the whole distribution of $x(t)|(y(t_k),\forall t_k\leq t)$.

Moreover, since we are going to use black-box optimization algorithms, another quality criterion could be used instead of the error variance \eqref{eq:def:Ec}, for example, the squared Euclidean norm could be replaced by the $1-$norm or the $\infty-$norm.
\end{remark}

\subsection{Optimal intermittent filters}\label{sec:optimalIPF}

An \emph{optimal intermittent filter} is an intermittent filter whose measurement times have been chosen according to a certain optimality criterion. In this section, we present three different optimality criteria for selecting the measurement times. Each optimality criterion induces \AA{a different} optimal intermittent filter.

\subsubsection{Stochastic program filter}
In the stochastic program filter, each measurement time $t_{j+1}$ is chosen taking the already acquired measurements $y(t_{1:j})$ into account. Then, each measurement time $t_{j+1}$ is the solution of an optimization program.

More formally, the \emph{stochastic program filter} is the intermittent filter whose measurement times are the solution of the following stochastic program.

\begin{problem}[Stochastic Program]\label{prob:stochProg}
\begin{align*}
V_0(\emptyset;\emptyset)=&\min_{t_1}\Bigg\{ \sum_{t=0}^{t_1-1}\Ec[t|\emptyset]\\
&+\E_{y(t_1)}\Bigg[\min_{t_2}\Bigg\{\sum_{t=t_1}^{t_2-1}\Ec[t|y(t_1)] + \cdots\\
&+\E_{y(t_{N-1})}\Bigg[\min_{t_N}\Bigg\{\sum_{t=t_{N-1}}^{t_N-1}\Ec[t|y(t_{1:N-1})]\\
&+\E_{y(t_N)}\Bigg[\sum_{t=t_N}^T\Ec[t|y(t_{1:N})] \Bigg]\Bigg\}\Bigg]\cdots\Bigg\}\Bigg]\Bigg\},
\end{align*}
subject to $0\leq t_1< t_2<\dots<t_N\leq T$.
\end{problem}

Introducing the notation $t_0\coloneqq0$, the problem can be stated recursively as,
\begin{align}
&V_j(t_{1:j};y(t_{1:j}))=\min_{t_{j+1}}\Bigg\{ \sum_{t=t_j}^{t_{j+1}-1}\Ec[t|y(t_{1:j})]\nonumber\\
&+ \E_{y(t_{j+1})}\left[ V_{j+1}(t_{1:j+1};y(t_{1:j+1})) \right] \ \text{with\ } t_j<t_{j+1}\leq T \Bigg\},\label{eq:SP:recursion}
\end{align}
with the terminal condition,
\begin{align}
V_{N}(t_{1:N};y(t_{1:N}))=\sum_{t=t_N}^{T}\Ec[t|y(t_{1:N})].\label{eq:SP:terminalCond}
\end{align}

Solving this multistage stochastic program requires finding an optimal policy for each $t_{j+1}$. Such optimal policies are functions that associate the next optimal measurement time with the previously acquired measurements, i.e., it is $(y(t_1),\dots,y(t_j))\mapsto t_{j+1}$.

Under certain restrictive assumptions, including linearity and finite stochastic outcomes, stochastic multistage programs can be solved using scenario trees and duality \cite{shapiro2014lectures}. For more complicated problems such as ours (nonlinear, continuous random variables), one approach is to use reinforcement learning. In this paper, we propose another approach, which is to optimize the variables all in once and offline, i.e., before any measurement acquisition. We use this approximation and we call it the offline approach.

\subsubsection{Offline filter}
The offline filter is optimal (in the sense of the expected MSE) if all the measurement times must be chosen before receiving any measurement. Each measurement time $t_j$ is independent of the measurements that preceded it, $y(t_{1:j-1})$.

Formally, the \emph{offline filter} is the intermittent filter whose measurement times are the solution of the following offline program.

\begin{problem}[Offline Program]\label{prob:offline}
\begin{align*}
J_0(\emptyset;\emptyset)\coloneqq\min_{0\leq t_1<\cdots<t_N\leq T} \E_{y(t_{1:N})}\Bigg[\sum_{t=0}^T\Ec[t|y(t_{1:N})]\Bigg].
\end{align*}
\end{problem}
Because each optimal measurement time $t_{j}$ is independent of the previously acquired measurements $y(t_{1:j-1})$, all these optimal measurement times $t_{1:N}$ can be computed at once, before any measurement acquisition. This can be a decisive advantage for certain real-world applications if the time between two discrete time steps is too short to solve an optimization program. Unlike Problem \ref{prob:stochProg}, here the optimization variables are no longer functions but natural numbers. 

\subsubsection{Online filter}
To enhance filtering performance, once the measurements $y(t_{1:j})$ have been acquired, we may want to use them to recompute the next measurement time $t_{j+1}$ including the information previously acquired, i.e., the previous measurements. This requires solving $N$ different optimization programs, one for each measurement time.

Formally, the \emph{online filter} is the intermittent filter for which each measurement time $t_{j+1}$ (for $j=0,\dots,N-1$) is the solution of the following $(j+1)^\text{th}$ online program:

\begin{problem}[$(j+1)^\text{th}$ Online Program]\label{prob:online}
When measurements $y(t_{1:j})$ have been acquired, the next measurement time $t_{j+1}$ is computed by solving
\begin{align}
&J_j(t_{1:j};y(t_{1:j}))\coloneqq \min_{t_{j+1}<\cdots<t_N\leq T} \E_{y(t_{j+1:N})}\Bigg[\nonumber\\
&\sum_{t=t_{j}}^{T}\Ec[t|y(t_{1:N})] \Bigg], \text{\ such\ that\ } t_{j+1}>t_j,\label{eq:def:J}
\end{align}
where $t_0\coloneqq 0$ but the constraint is replaced by $t_1\geq 0$ when $j=0$.
\end{problem}
In this $(j+1)^\text{th}$ program, only the measurement time $t_{j+1}$ is used by the online filter. To avoid ambiguities, it will sometimes be noted $t_{j+1}^o$. The other variables of the $(j+1)^\text{th}$ program, i.e., the $t_{j+2:N}$, are not the measurement times of the online filter because they will be replaced by the solutions of the next programs. Such online programs can be seen as a recursive version of the offline program.

\begin{remark}\label{remark:t1}
Setting $j=0$ in Problem \ref{prob:online} gives exactly Problem \ref{prob:offline}. Consequently, the optimal $t_1$s are the same for these two problems.
\end{remark}

\begin{remark}\label{remark:kalman}
If the dynamical system (\ref{eq:model:x})-(\ref{eq:model:x0}) is linear and Gaussian, i.e., functions $f_t(\cdot,\cdot), g_t(\cdot,\cdot)$ and $h_t(\cdot)$ are linear and distributions of $w(t), v(t)$ and $x(0)$ are Gaussian, then for given measurement times $t_1,\dots,t_N$, the optimal filtered estimate $\zh(t|y(t_{1:j}))$ is given explicitly by the Kalman filter \cite[Theorem 2]{kalman1960new}.

In addition, the variance of the filtering error $\Ec[t|y(t_{1:j})]$ is independent of the measurements $y(t_{1:j})$, which implies that the Kalman gains can be computed offline (see, e.g., \cite[p.3]{aspeel2019optimal} for details). Consequently, the expectation operators in Problems \ref{prob:stochProg}, \ref{prob:offline} and \ref{prob:online} are equivalent to identity operators, which gives the same problem three times. In conclusion, in the linear Gaussian case, the optimal solutions of Problems \ref{prob:stochProg}, \ref{prob:offline} and \ref{prob:online} are the same.
\end{remark}

If $t_{1:j}$ are the $j$ first measurement times and $y(t_{1:j})$ the corresponding measurements, we denote $F_j(t_{1:j};y(t_{1:j}))$ the remaining cost-to-go from time $t_j$ using the online filter. For $j=0,\dots,N-1$, it is
\begin{align*}
&F_j(t_{1:j};y(t_{1:j}))=\sum_{t=t_j}^{t_{j+1}^o-1}\Ec[t|y(t_{1:j})]\\
&+\E_{y(t_{j+1}^o)}\Bigg[ \sum_{t=t_{j+1}^o}^{t_{j+2}^o-1} \Ec[t|y(t_{1:j}),y(t_{j+1}^o)] + \cdots\\
&+\E_{y(t_{N-1}^o)}\Bigg[ \sum_{t=t_{N-1}^o}^{t_N^o-1} \Ec[t|y(t_{1:j}),y(t_{j+1:N-1}^o)]\\
&+\E_{y(t_N^o)}\Bigg[ \sum_{t=t_{N}^o}^{T} \Ec[t|y(t_{1:j}),y(t_{j+1:N}^o)] \Bigg]\Bigg]\dots\Bigg],
\end{align*}
and for $j=N$, it is
\begin{align}
F_{N}(t_{1:N};y(t_{1:N}))=\sum_{t=t_N}^{T}\Ec[t|y(t_{1:N})].\label{eq:online:terminalCond}
\end{align}

With this definition, the following recursive relation holds,
\begin{align}
&F_j(t_{1:j};y(t_{1:j}))= \sum_{t=t_j}^{t_{j+1}^o-1}\Ec[t|y(t_{1:j})]\nonumber\\
&+ \E_{y(t_{j+1}^o)} \Bigg[ F_{j+1}(t_{1:j},t_{j+1}^o;y(t_{1:j}),y(t_{j+1}^o)) \Bigg].\label{eq:online:recursion}
\end{align}

In the following theorem, we establish that the expected mean square filtering error is smaller or equal for the stochastic program filter than for the online filter, which itself is smaller or equal than for the offline filter.

\begin{theorem}\label{thm:SP<online<offline}
$V_0(\emptyset;\emptyset) \leq F_0(\emptyset;\emptyset) \leq J_0(\emptyset;\emptyset)$.
\end{theorem}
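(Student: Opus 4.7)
The plan is to prove the stronger pointwise statement
\[V_j(t_{1:j};y(t_{1:j})) \leq F_j(t_{1:j};y(t_{1:j})) \leq J_j(t_{1:j};y(t_{1:j}))\]
by backward induction on $j = N, N-1, \ldots, 0$, and then recover the theorem by specializing to $j=0$ with empty arguments. The base case at $j=N$ is immediate: the terminal conditions \eqref{eq:SP:terminalCond} and \eqref{eq:online:terminalCond} give $V_N = F_N$, and extending $J_N$ to the same terminal sum (reflecting that no measurement times remain to be chosen) makes all three quantities coincide.

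For the step $V_j \leq F_j$, I would apply the Bellman recursion \eqref{eq:SP:recursion}, which defines $V_j$ as a minimum over $t_{j+1}$. Evaluating that minimum at the (possibly suboptimal) online choice $t_{j+1}^o$ yields an upper bound on $V_j$; invoking the induction hypothesis $V_{j+1} \leq F_{j+1}$ inside the resulting expectation and comparing with the recursion \eqref{eq:online:recursion} for $F_j$ then gives the desired inequality.

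For the step $F_j \leq J_j$, I would expand $J_j$ at the pair $(t_{j+1}^o,\hat{t}_{j+2:N})$ that attains its offline minimum, where by definition $t_{j+1}^o$ is precisely the optimizer used by the online filter at stage $j+1$. Invoking the causality property (Remark \ref{remark:notation}) to rewrite $\Ec[t|y(t_{1:N})]$ as $\Ec[t|y(t_{1:j})]$ for $t < t_{j+1}^o$, the cost decomposes as
\[J_j(t_{1:j};y(t_{1:j})) = \sum_{t=t_j}^{t_{j+1}^o-1}\Ec[t|y(t_{1:j})] + \E_{y(t_{j+1}^o)}\!\left[\E_{y(\hat{t}_{j+2:N})}\!\left[\sum_{t=t_{j+1}^o}^T\Ec[t|y(t_{1:N})]\right]\right].\]
Because $\hat{t}_{j+2:N}$ is feasible but not necessarily optimal for the inner program defining $J_{j+1}(t_{1:j+1};y(t_{1:j+1}))$, the inner bracket is at least $J_{j+1}$. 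Applying the induction hypothesis $J_{j+1} \geq F_{j+1}$ and comparing with \eqref{eq:online:recursion} closes the step.

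The main obstacle I anticipate is handling the asymmetric role of $t_{j+1}^o$. On the $V_j\leq F_j$ side it appears as a merely feasible (and thus potentially suboptimal) choice in the Bellman minimization, which is the easy direction. On the $F_j \leq J_j$ side, it is already the offline optimizer, and one must exploit a Jensen-type argument: committing to $\hat{t}_{j+2:N}$ before observing $y(t_{j+1}^o)$ can only be worse than re-optimizing afterward, equivalently, the minimum of an expectation over offline schedules exceeds the expectation of the minimum over observation-dependent online schedules. A careful use of Remark \ref{remark:notation} to suppress future measurements from early cost terms is what makes the two recursions line up cleanly.
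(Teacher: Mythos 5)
Your proposal is correct and follows essentially the same route as the paper: backward induction on the stage index $j$, proving $V_j\leq F_j$ by evaluating the Bellman minimum at the online choice $t_{j+1}^o$, and $F_j\leq J_j$ by decomposing the offline cost at its minimizer (whose first component is $t_{j+1}^o$), using Remark \ref{remark:notation} to split the sums and the fact that committing to $\hat{t}_{j+2:N}$ before observing $y(t_{j+1}^o)$ dominates the re-optimized $J_{j+1}$. The only difference is presentational: the paper builds the second inequality upward from the induction hypothesis, whereas you unfold $J_j$ downward from its optimizer, which is the same argument in reverse order.
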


\begin{proof}
The proof is available in the appendix.
\end{proof}

The difference between Problems \ref{prob:stochProg} and \ref{prob:online} may not seem obvious. Indeed, they both require computing the measurement times online by solving $N$ optimization problems. To clarify the difference, we propose an example where these two problems give different solutions.

\begin{example}
Consider three time steps, i.e., $T=2$, with $N=2$ measurements and consider the system illustrated in Figure \ref{fig:counterExample} and summarized as
\begin{align}\label{eq:counterExample:sys:first}
y(0)=z(0)=x(0) &=\begin{cases} 1, & \text{with\ probability\ 1/2}\\
        -1, & \text{with\ probability\ 1/2}, \end{cases}\\
y(1)=z(1)=x(1) &= \begin{cases} w(0), & \text{if}\ x(0)=1 \\
    0, & \text{if}\ x(0)=-1, \end{cases}\\
y(2)=z(2)=x(2) &= \begin{cases} 0, & \text{if}\ x(1)\neq 0 \\
    w(1), & \text{if}\ x(1)=0, \end{cases}\label{eq:counterExample:sys:last}
\end{align}
where $w(0)$ and $w(1)$ independently follow a uniform distribution between $-6$ and $6$, i.e., $w(0),w(1)\sim\mathcal{U}([-6,6])$. 

\begin{figure}
\centering
\begin{subfigure}{0.45\columnwidth}
\begin{tikzpicture}[scale=1]
\draw [->,>=latex] (0,-3.5) -- (0,3.5);
\draw (0,3.5) node[above]{$x(t),\ y(t),\ z(t)$};
\draw [->,>=latex] (0,0) -- (2.5,0);
\draw (2.5,0) node[right]{$t$};
\foreach \y in {-6,...,6} {
  \draw (0.15,\y/2) -- (-0.15,\y/2) node[left] {$\y$};
}
\draw (1,0.15) -- (1,-0.15) node[below left] {$1$};
\draw (2,0.15) -- (2,-0.15) node[below] {$2$};
\draw (0,0.5) node{$\bullet$};
\draw (2,0) node{$\bullet$};
\draw (1,-3) -- (1,3);
\draw (0.85,-3) -- (1.15,-3);
\draw (0.85,3) -- (1.15,3);
\end{tikzpicture}
\caption{If $x(0)=1$.}
\label{fig:counterExample:+1}
\end{subfigure}
\hfill
\begin{subfigure}{0.45\columnwidth}
\begin{tikzpicture}[scale=1]
\draw [->,>=latex] (0,-3.5) -- (0,3.5);
\draw (0,3.5) node[above]{$x(t),\ y(t),\ z(t)$};
\draw [->,>=latex] (0,0) -- (2.5,0);
\draw (2.5,0) node[right]{$t$};
\foreach \y in {-6,...,6} {
  \draw (0.15,\y/2) -- (-0.15,\y/2) node[left] {$\y$};
}
\draw (1,0.15) -- (1,-0.15) node[below] {$1$};
\draw (2,0.15) -- (2,-0.15) node[below left] {$2$};
\draw (0,-0.5) node{$\bullet$};
\draw (1,0) node{$\bullet$};
\draw (2,-3) -- (2,3);
\draw (1.85,-3) -- (2.15,-3);
\draw (1.85,3) -- (2.15,3);
\end{tikzpicture}
\caption{If $x(0)=-1$.}
\label{fig:counterExample:-1}
\end{subfigure}
\caption{Illustration of the system (\ref{eq:counterExample:sys:first}) through (\ref{eq:counterExample:sys:last}). Vertical lines represent uniform distributions $\mathcal{U}([-6,6])$. Each of scenarios (a) and (b) has a $50/50$ chance of occurring, depending on whether $x(0)=1$ or $-1$.}
\label{fig:counterExample}
\end{figure}
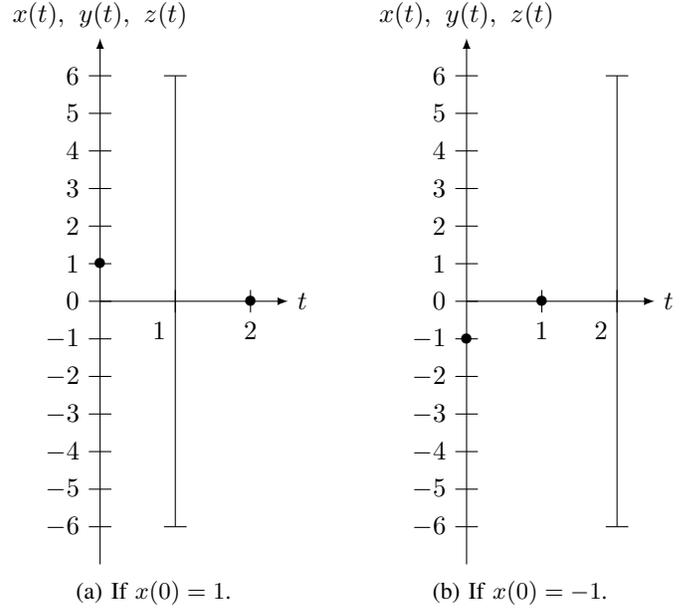

First observe that it is possible to make no filtering error. To do this, set the first measurement time to $t_1=0$ and then, if $y(0)=1$, set the second to $t_2=1$ and on the contrary, if $y(0)=-1$, set the second to $t_2=2$. With this schedule, a measurement is acquired whenever there is an uncertainty, which guarantees that there is no filtering error. It is the solution of Problem \ref{prob:stochProg}.

However, we will show that the solutions of both Problems \ref{prob:offline} and \ref{prob:online} are to choose $t_1=1$ and $t_2=2$ which results in a filtering error at time $t=0$.

Doing an exhaustive search for Problem \ref{prob:offline} shows that choosing $t_1=1$ and $t_2=2$ is optimal. This is due to the fact that if there is no measurement at $t=1$ or at $t=2$, a large error will be made half the time because of the uniformly distributed $x(t)$. Then, because the optimal $t_1$s for Problems \ref{prob:online} and \ref{prob:offline} are the same (see Remark \ref{remark:t1}), $t_1=1$ is also optimal for the former. But then, the only remaining possibility for $t_2$ is $t_2=2$. This shows that the optimal solutions of both Problems \ref{prob:offline} and \ref{prob:online} are $t_1=1$ and $t_2=2$ which gives a filtering error at time $t=0$.

This example illustrates that the optimal time $t_1$ of Problem \ref{prob:online} is computed by assuming that the time $t_2$ cannot depend on $y(t_1)$.
\end{example}

\subsection{Numerical approximations}\label{sec:numericalApprox}
In this section, we present the numerical approximations which are used to make Problems \ref{prob:offline} and \ref{prob:online} tractable. Firstly, we briefly explain how to approximate numerically the estimate $\zh(t|y(t_{1:j}))$ using a particle filter and then how to approximate the objective functions of these two problems. We reserve the term ``intermittent filter" for the ideal filter and use ``intermittent particle filter" for its approximation. Let us emphasize that Theorem \ref{thm:SP<online<offline} applies only in the ideal case.

\subsubsection{Approximation of the estimate}\label{sec:numericalApprox:estimate}
The estimate $\zh(t|y(t_{1:j}))$ is not straightforward to compute but can be approximated using a particle filter.

Essentially, a particle filter algorithm alternates between (i) a \textit{prediction step} (also called mutation), used to compute the estimate at the next time step from the estimate at the current step; and (ii) a \textit{correction step} (also called selection) that updates the current estimation to incorporate the information acquired in the last measurement. To deal with intermittent measurements, the correction step (ii) is skipped when no measurement is available, i.e., when $t\neq t_k$ for all $k$. \AA{For reasons of numerical stability, the particle weights are computed and stored in the logarithmic domain (see, e.g., \cite[Section 4.3.1]{lien2012comparison} for details).}


In this paper, we use the \emph{sampling importance resampling particle filter} (see \cite[Algorithm 4]{arulampalam2002tutorial}).

In general, the approximation provided by the particle filter is biased when the number of particles is finite (the bias being $O(1/\text{\#\ particles})$) \cite[Section 3.2]{doucet2009tutorial}. However, we neglect this aspect in this article and leave the study of the impact of this bias for future work.

In what follows, we will make a slight abuse of notation by using the same notation $\zh(t|y(t_{1:j}))$ for the quantity (\ref{eq:def:zh}) and for its approximation calculated by the particle filter.

\subsubsection{Approximation of the objective functions}\label{sec:MC}
The objective functions of Problems \ref{prob:offline} and \ref{prob:online} contain an expectation operator, which makes these functions challenging to evaluate. To tackle this problem, we use a Monte Carlo approximation of the expectation.

Because the objective function of Problem \ref{prob:offline} is the same as the objective function of Problem \ref{prob:online} for $j=0$ (see Remark \ref{remark:t1}), we focus on the objective function of Problem \ref{prob:online} for any $j$, which includes Problem \ref{prob:offline}.

The Monte Carlo approximation of the objective function of Problem \ref{prob:online} is represented in Figure \ref{fig:schemeMC}. Assuming that the first $j$ measurements $y(t_{1:j})$ have already been acquired and that future measurement times $t_{j+1:N}$ are fixed, one can simulate $K$ realizations of the state $\{x^k(t)\}^{k=1,\dots,K}_{t=0,\dots,T}$ drawn according to Equations (\ref{eq:model:x}) and (\ref{eq:model:x0}). The corresponding quantities that remain to be estimated $\{z^k(t)\}^{k=1,\dots,K}_{t=t_j,\dots,T}$ can be computed using Equation (\ref{eq:model:z}). Similarly, corresponding measurements $\{y^k(t_l)\}^{k=1,\dots,K}_{l=j+1,\dots,N}$ can be simulated thanks to Equation (\ref{eq:model:y}). For each $k$, the simulated measurement sequence can be concatenated to the known measurements to give a complete measurement sequence $y(t_1),\dots,y(t_j),y^k(t_{j+1}),\dots,y
^k(t_N)$, from which the particle filter computes the estimates $\{\zh^k(t|y(t_1),\dots,y^k(t_N))\}_{t=t_j,\dots,T}$ of $\{z^k(t)\}_{t=t_j,\dots,T}$. Then, thanks to Definition (\ref{eq:def:Ec}), the Monte Carlo estimator is computed as
\begin{align}\nonumber
&\E_{y(t_{j+1:N})}\Bigg[\sum_{t=t_{j}}^{T}\Ec[t|y(t_{1:N})] \Bigg]= \E_{y(t_{j+1:N})}\Bigg[ \nonumber\\
&\sum_{t=t_{j}}^{T} \E_{x(0),w(0),\dots,w(T-1)}\Big[ \nonumber\\
&\|z(t)-\zh(t|y(t_{1:N}))\|^2\Big|y(t_l),\forall t_l\leq t\Big]\Bigg] \nonumber\\
&\approx \dfrac{1}{K}\sum_{k=1}^K\sum_{t=t_j}^T \|z^k(t)-\zh^k(t|y(t_{1:j}),y^k(t_{j+1:N}))\|^2. \label{eq:MC:estimation}
\end{align}

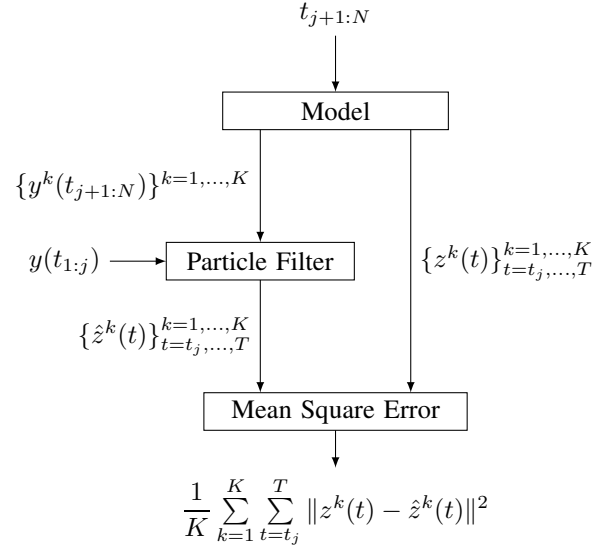
\begin{figure}[t]
\centering
\begin{tikzpicture}
\draw (0,3) node[above]{$t_{j+1:N}$};
\draw [->,>=latex] (0,3) -- (0,2.25);
\draw (0,2) node{Model};
\draw (-1.5,1.75) rectangle (1.5,2.25);
\draw [->,>=latex] (-1,1.75) -- (-1,0.25);
\draw (-1,0) node{Particle Filter};
\draw (-2.25,-0.25) rectangle (0.25,0.25);
\draw [->,>=latex] (-3,0) -- (-2.25,0);
\draw (-3,0) node[left]{$y(t_{1:j})$};
\draw (-1,1) node[left]{$\{y^k(t_{j+1:N})\}^{k=1,\dots,K}$};
\draw [->,>=latex] (-1,-0.25) -- (-1,-1.75);
\draw (-1.75,-2.25) rectangle (1.75,-1.75);
\draw (-1,-1) node[left]{$\{\zh^k(t)\}^{k=1,\dots,K}_{t=t_j,\dots,T}$};
\draw (0,-2) node{Mean Square Error};
\draw [->,>=latex] (1,1.75) -- (1,-1.75);
\draw (1,0) node[right]{$\{z^k(t)\}^{k=1,\dots,K}_{t=t_j,\dots,T}$};
\draw [->,>=latex] (0,-2.25) -- (0,-2.75);
\draw (0,-2.75) node[below]{$\dfrac{1}{K}\sum\limits_{k=1}^K\sum\limits_{t=t_j}^T \|z^k(t)-\zh^k(t)\|^2$};
\end{tikzpicture}
\caption{Representation of the Monte Carlo algorithm that estimates $\E_{y(t_{j+1:N})}[\sum_{t=t_{j}}^{T}\Ec[t|y(t_{1:N})] ]$. The outputs generated by the Model block are drawn according to Equations (\ref{eq:model:x}) through (\ref{eq:model:x0}) and the distributions of $v(t)$ and $w(t)$. The particle filter computes an estimate $\zh^k(t)=\zh^k(t|y(t_{1:j}),y^k(t_{j+1:N}))$ of $z^k(t)$ from previous intermittent measurements. The Mean Square Error block computes the mean squared of the difference between its inputs; it is the right-hand side of Equation (\ref{eq:MC:estimation}).}
\label{fig:schemeMC}
\end{figure}

Computing this Monte Carlo approximation requires running the particle filter $K$ times, which can be computationally expensive.

The fact that we have access only to an approximation of the objective function requires particular attention during the optimization. Note that since the particle filter provides a biased approximation, the approximation of the cost function may be biased as well. However, accounting for this bias is outside the scope of this paper and is left to future work.

\subsection{Optimization algorithms}\label{sec:optimizationAlgo}
Solving the combinatorial optimization Problem \ref{prob:offline} (or equivalently, in light of Remark \ref{remark:t1}, Problem \ref{prob:online} with $j=0$) with an exhaustive search would require evaluating the objective function $\frac{(T+1)!}{(T+1-N)!N!}$ times, i.e., the number of subsets of size $N$ in a set of size $T+1$. This is computationally intractable for large $N$ and $T$ values.

In this section, we present five different heuristic optimization algorithms to find an approximate solution for Problems \ref{prob:offline} and \ref{prob:online} efficiently.

\subsubsection{Random trial algorithm (RT)}
The random trial algorithm samples measurement time sets uniformly at random and evaluates their corresponding costs. The measurement time set with minimum cost is returned.

\subsubsection{Greedy forward algorithm (GF)}
The greedy forward algorithm begins with an empty measurement set. Sequentially, it adds the measurement times one at a time such that, at each iteration, the added measurement time minimizes the cost. It stops when the set contains $N$ measurement times.


\subsubsection{Greedy backward algorithm (GB)}
The greedy backward algorithm works in the opposite direction from the greedy forward algorithm. It begins with a complete measurement set. Sequentially, it takes out the measurement times one at a time such that, at each iteration, the dropped measurement time minimizes the cost. It stops when the set contains $N$ measurement times.


\subsubsection{Simulated annealing algorithm (SA)}

The implementation of the SA algorithm is done according to \cite[Section 1.2.3]{delahaye2019simulated}. The initial temperature parameter is $T\si{\degree}=10$ and the geometric temperature decay rate is $0.9$, i.e., after each generation, $T\si{\degree}\coloneqq 0.9\cdot T\si{\degree}$. If a mutation reduces the cost, it is kept. If it increases the cost, it is kept with probability $\exp(-(\text{cost\ increase})/T\si{\degree})$. The mutations are executed by prohibiting the duplication of measurement times and the mutation probability is $0.1$ by measurement time.

\subsubsection{Genetic algorithm (GA)}
\label{sec:Genetic algorithm (GA)}
The implementation of the genetic algorithm is based on \cite{mitchell1998introduction}. The selection of the individuals kept for the next generation is done according to stochastic universal sampling \cite[Section 5.4]{mitchell1998introduction}. To ensure that the number of measurement times remains constant over generations, a count preserving crossover operator is implemented \cite[Section 3.6]{umbarkar2015crossover}. It prevents the production of individuals $(t_{j+1},\dots,t_N)$ for which $t_k=t_l$ with $k\neq l$. For the same reason, the mutations are executed by prohibiting the duplication of measurement times. The algorithm uses sigma scaling with a unitary sigma coefficient \cite[Section 5.4]{mitchell1998introduction}. Crossover probability is 1 and mutation probability per measurement time is 0.003.

When the RT, SA or GA algorithm is used to solve the $N$ online Problems \ref{prob:online}, the solution found for the $j^\text{th}$ problem can be added to the initial population when solving the $(j+1)^\text{th}$ problem. This can speed up the convergence of the algorithm but also reduce the exploration capacity of these methods. In this work, we do not use this acceleration trick.

As mentioned in Subsection \ref{sec:numericalApprox}, the objective functions of Problems \ref{prob:offline} and \ref{prob:online} can be evaluated approximately only. This makes minimization difficult if it is too sensitive to bad cost function estimations. To face this issue, the SA algorithm and the GA return the best individual of the last generation instead of the best individual in all generations.

In Section \ref{sec:resultsAndDiscussion}, the performances of these five algorithms are compared. It will show that the GA outperforms the other algorithms on the studied example.



\section{Results and discussion} \label{sec:resultsAndDiscussion}

\subsection{Results parameters} \label{sec:resultsParam}
In this section, first we present two dynamical systems used for the results, then we describe the parameters of the simulations and the performance indicators.

\subsubsection{Tumor motion model} \label{sec:tumorModel}
To illustrate the performances of our method, we study a model describing one-dimensional tumor motion. As mentioned in introduction, the problem of tracking tumors with X-rays requires doing the best of each X-ray acquisition in order to spare healthy tissues.

The duration between each discrete time step is $\delta$. The tumor position to estimate $z(t)$ is a shifted sinusoidal signal with a time-varying amplitude $a(t)$, a time-varying shift $b(t)$, and a constant frequency $\omega$. Both $a(t)$ and $b(t)$ are bounded random walks. Bounds ensure that values remain realistic over time. In addition, the constant oscillation frequency $\omega$ is picked uniformly at random at the beginning of the process. Each measurement $y(t)$ is a noisy version of the position $z(t)$.

Let us define the state $x(t)=\begin{bmatrix} a(t) & b(t) & \omega(t) \end{bmatrix}^\top\in\R^3$ and the process noise $w(t)=\begin{bmatrix} w_a(t) & w_b(t) \end{bmatrix}^\top\in\R^2$. The dynamic of the system is defined as follows:
\begin{align}
x(t+1)&=\begin{bmatrix} a(t+1) \\ b(t+1) \\ \omega(t+1) \end{bmatrix}\nonumber\\
&=f(x(t),w(t)) \label{eq:tumorModel:x}\\
&\coloneqq\begin{bmatrix}
\text{clip}(a(t)+w_a(t),\underline{a},\bar{a})\\
\text{clip}(b(t)+w_b(t),\underline{b},\bar{b})\\
\omega(t)
\end{bmatrix},\nonumber\\
y(t)&=g_t(x(t),v(t))=a(t)\sin(\omega(t) t\delta)+b(t)+v(t), \label{eq:tumorModel:y}\\
z(t)&=h_t(x(t)) = a(t)\sin(\omega(t) t\delta)+b(t), \label{eq:tumorModel:z}\\
x(0)&=\begin{bmatrix} a(0) \\ b(0) \\ \omega(0) \end{bmatrix} \sim \mathcal{U} \left( [\underline{a},\bar{a}] \times [\underline{b},\bar{b}] \times [\underline{\omega},\bar{\omega}] \right).\label{eq:tumorModel:x0}
\end{align}
Equations (\ref{eq:tumorModel:x}), (\ref{eq:tumorModel:y}) and (\ref{eq:tumorModel:z}) hold for $t=0,\dots,T-1$, $t\in\{t_1,\dots,t_N\}$, and $t=0,\dots,T$, respectively. Noises $w_a(t)$, $w_b(t)$ and $v(t)$ are independent zero mean Gaussian noises and have a unitary standard deviation, i.e., $\sigma_a=\sigma_b=\sigma_v=1\ [mm]$. The clipping function is defined as $\text{clip}(x,\underline{x},\bar{x})\coloneqq\min(\max(x,\underline{x}),\bar{x})$. Equation (\ref{eq:tumorModel:x0}) indicates that the initial state is uniformly distributed at random on the indicated domain.

We consider $N=11$ measurements in the range of $0$ to $T=30$. To make this model more realistic, the values of $\underline{a}$, $\bar{a}$, $\underline{b}$, and $\bar{b}$ are inspired from \cite[Table 1]{dhont2018long} and the values of $\underline{\omega}$ and $\bar{\omega}$ are inspired from \cite[Table 1]{wuyts2011sigh}. The parameters of the system are
\begin{align*}
&\underline{a}=8.8\ [mm],\ \bar{a}=24\ [mm], \\
&\underline{b}=-5.8\ [mm],\ \bar{b}=5.8\ [mm],\\
&\underline{\omega}=1.3\ [rad/s], \ \bar{\omega}=2.1\ [rad/s],\\
&\delta=0.25\ [s],\ \sigma_a=\sigma_b=\sigma_v=1\ [mm],\\
&T=30,\ N=11.
\end{align*}

The importance density used by the particle filter is the prior density, except for $\omega(t)$ where we use $\omega(t+1)=\text{clip}(\omega(t)+w_\omega(t),\underline{\omega},\bar{\omega})$ with $w_\omega(t)$, an independent zero mean Gaussian noise with standard deviation $\sigma_\omega=0.005$.

%

\subsubsection{A common benchmark for particle filters}\label{sec:toy_ex}

We also test our method on the following widely studied model \cite{arulampalam2002tutorial, aspeel2020optimal,carlin1992monte,kitagawa1996monte,kadirkamanathan2002particle}:
\begin{align}
x(t+1) &=\ \dfrac{x(t)}{2} + \frac{25 x(t)}{1+x(t)^2} + 8 \cos(1.2 t) + w(t),  \label{eq:toy_ex:x}\\
y(t)   &=\ \frac{x(t)^2}{20} + v(t), \label{eq:toy_ex:y}\\
z(t)   &=\ x(t), \label{eq:toy_ex:z}\\
x(0) &\sim\ \N(0,5^2), \label{eq:toy_ex:x0}
\end{align}
where $x(t),y(t),z(t),w(t),v(t)\in\R$. Equations \eqref{eq:toy_ex:x}, \eqref{eq:toy_ex:y} and \eqref{eq:toy_ex:z} hold for $t=0,\dots,T-1$, $t\in\{t_1,\dots,t_N\}$, and $t=0,\dots,T$, respectively. Quantities $w(t)$, $v(t)$ and $x(0)$ are randomly distributed according to independent zero mean Gaussian distributions with standard deviations $\sigma_w=1$, $\sigma_v(t)=\sin(0.25t)+2$ and $\sigma_{x_0}=5$, respectively. As for the tumor model, $N=11$ measurements are allowed and the horizon is set at $T=30$. For this model, the importance density used by the particle filter is the prior density.

\subsubsection{Performance indicators} \label{sec:perfIndicators}
To test the proposed methods, we simulate many realizations of $\{y(t)\}_{t=0,\dots,T}$ and $\{z(t)\}_{t=0,\dots,T}$ according to system dynamic on which our filtering method is applied. Then, the filtering mean square error, $\text{MSE}=\frac{1}{T+1}\sum_{t=0}^T\|z(t)-\zh(t)\|^2$, is computed and is compared with the filtering mean square error $\text{MSE}_\text{REG}$ obtained with regular measurement times,
\begin{align}\label{eq:regularMeasurements}
t_{1:N}=\left\{\left. \text{Round}\left[\dfrac{kT}{N-1}\right] \right| k=0,\dots,N-1  \right\},
\end{align}
where $\text{Round}[\cdot]$ is the rounding operator. More precisely, we look at the gain
\begin{align}\label{eq:gain}
G\coloneqq \log_{10}\left( \dfrac{\text{MSE}_{\text{REG}} }{ \text{MSE} } \right).
\end{align}
It illustrates the merits of using optimal measurement times instead of regular ones. A positive gain indicates that the considered intermittent particle filter outperforms the regular one.

As performance indicators, we look at the mean and median gain and the proportion of positive gain over all the simulations. Notice that the mean gains can be seen as an estimation of the expected gain and the proportion of positive gains as an estimation of the probability of outperforming the regular measurement filter.


\subsubsection{Optimization and simulations parameters} \label{sec:optiAndSimParam}


As the measurement times that are the solution of the online Problem \ref{prob:online} are a function of the previous measurements, the optimization algorithm has to be run on each simulation. This increases the computation time significantly. Therefore, the solutions of Problems \ref{prob:offline} and \ref{prob:online} are studied separately and the number of simulations and the optimization parameters have been set accordingly. The values of the optimization parameters and the simulation parameters are given in Table \ref{tab:parametersValues}. Note that the number of particles used for the optimization is different from the number of particles for the filtering.
\begin{table}[ht]
\centering
\begin{tabular}{|l||l|l|l|}
\hline
               & Parameters & Offline  & Online  \\ \hline\hline
\multirow{4}{*}{Optimization} & \# draws $K$       & 1000                 & 200                \\ \cline{2-4}
& \# particles       & 200                 & 100                \\ \cline{2-4}
&pop. size      & 50                 & 30                 \\ \cline{2-4}
&\# generations      & 25                  & 15                 \\ \hline\hline
\multirow{2}{*}{Simulations} & \# simulations       & 100000              & 500                \\ \cline{2-4}
& \# particles & 1000                 & 1000                \\ \cline{2-4}
\hline
\end{tabular}%
\caption{Values of the model parameters, test parameters, and optimization parameters for the offline Problem \ref{prob:offline} and the online Problem \ref{prob:online}.}
\label{tab:parametersValues}
\end{table}

\subsection{Comparison of optimization algorithms} \label{sec:compareOptimizationAlgo}

The different optimization algorithms proposed in Subsection \ref{sec:optimizationAlgo} are tested on the offline Problem \ref{prob:offline} for the tumor motion model described in Section \ref{sec:tumorModel}.

For all the optimization algorithms, the evolution of the cost of Problem \ref{prob:offline} with respect to the number of cost function evaluations is illustrated in Figure \ref{fig:algoComparison}. Because the number of cost function evaluations of the GF and GB algorithm is fixed, they are represented by points in the figure.

\begin{figure}[ht]
    \centering
    \includegraphics[width=8.5cm]{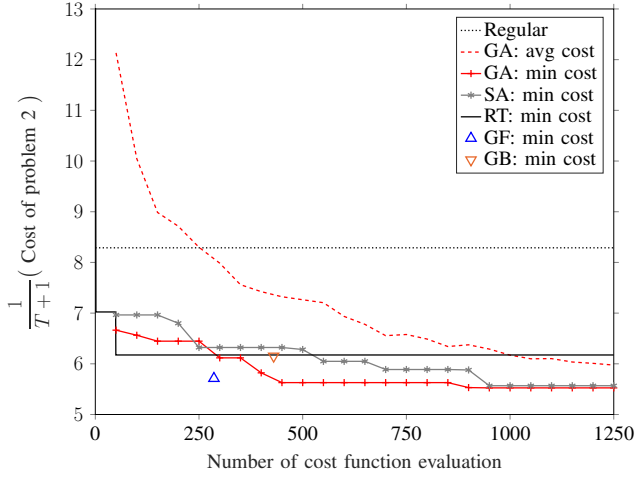}
    \caption{Evolution of the minimum cost of the offline Problem \ref{prob:offline} with respect to the number of cost function evaluations for the different optimization algorithms: genetic algorithm (GA), simulated annealing (SA), random trial (RT), greedy forward (GF), and greedy backward (GB). It illustrates the quality of minimization for given computational resources. The evolution of the average cost of the population of the GA algorithm at each generation and the cost of regularly spaced measurement times are shown as well.}
    \label{fig:algoComparison}
\end{figure}

One can observe that all five optimization algorithms return a measurement subset with associated expected MSE significantly lower than the regularly spaced measurements. The final expected MSE of the SA, RT, GF and GB optimization algorithms all lie close to $6$. Note that the GF and GB algorithms give good results for a relatively small number of evaluations of the cost function. Consequently, they can be a good option if computing resources are limited. The GA and SA algorithms perform notably better with a final minimum cost around $5.5$. One can note that the difference between GA average and minimum cost decreases over generations before reaching quasi-convergence, meaning that most of the individuals are identical, which is the desired convergence behavior for a GA.

As a result of these observations, the GA is used as the optimization algorithm in the rest of the paper.

\subsection{Filtering performance}\label{sec:filteringPerformance}
In this subsection, we analyze the filtering performance obtained using the offline (Problem \ref{prob:offline}) and online (Problem \ref{prob:online}) measurement times. First, we report the results on the tumor motion model (Subsection \ref{sec:tumorModel}), then we present the results for the common benchmark (Subsection \ref{sec:toy_ex}).

\subsubsection{Results for the tumor motion model (Subsection \ref{sec:tumorModel})}
The histogram of the gains $G$ for the offline Problem \ref{prob:offline} and for the online Problem \ref{prob:online} can be found in Figure \ref{fig:histo_both:tumor}. The offline gain is computed over 100000 simulations and the online gain is computed over 500 simulations. The vertical line corresponds to the null gain. These histograms can be interpreted as approximations of the probability density of the gains $G$. The proportion of positive gain for the GA algorithm can be read as the area of the histogram above $0$. The corresponding performance indicators are reported in Table \ref{tab:performanceIndicators:tumor} for both the offline particle filter (related to Problem \ref{prob:offline}) and online particle filter (related to Problems \ref{prob:online}).

The mean and median gains are positive for our offline and online methods, which indicates that they generally outperform the regular particle filter. In addition, the proportion of positive gain indicates that our offline and online methods outperform the regular particle filter in \AA{$69.3\%$} and \AA{$76.3\%$} of the cases, respectively.




\begin{figure}[ht]
    \centering
    \includegraphics[width=8.5cm]{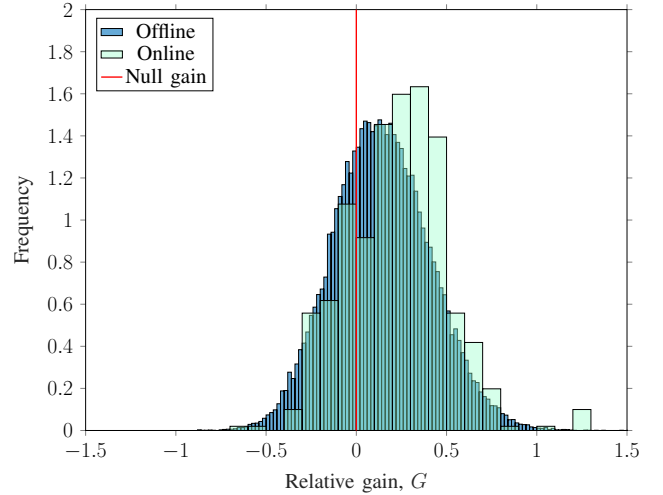}
    \caption{Histograms of the gain $G$ (see Equation (\ref{eq:gain})) for the tumor motion model (Subsection \ref{sec:tumorModel}). The histograms have been obtained over 100000 and $500$ simulations for the offline and online particle filters, respectively.}
    \label{fig:histo_both:tumor}
\end{figure}

\begin{table}[ht]
\centering
\begin{tabular}{|l||l|l|}
\hline
                         & Offline      & Online   \\ \hline\hline
Mean gain ($\pm$ std)    & 0.142 ($\pm$ 0.269) &  0.238 ($\pm$ 0.296) \\ \hline
Median gain              & 0.134      & 0.248   \\ \hline
Proportion positive gain & 69.3 \%     & 76.3 \%  \\ \hline
\end{tabular}%
\caption{Performance indicators for the offline and online solutions. The considered system, the performance indicators, and the optimization and simulation parameters are described in Subsections \ref{sec:tumorModel}, \ref{sec:perfIndicators}, and \ref{sec:optiAndSimParam}, respectively.}
\label{tab:performanceIndicators:tumor}
\end{table}



Figure \ref{fig:tumorModel:draw} represents a particular trajectory $ z (t) $ and the trajectory filtered by a particle filter with regular measurement times (see Equation (\ref{eq:regularMeasurements})), as well as with the offline and online particle filters. In all three cases, the measurement times are indicated. As expected, we obtain better filtering performance using the online particle filter (the gain is $G_{\text{online}}=0.20$) than the offline particle filter (the gain is $G_{\text{offline}}=0.15$), the largest mean square filtering error being obtained with the regular measurements.

\begin{figure}[ht]
    \centering
    \includegraphics[width=8.5cm]{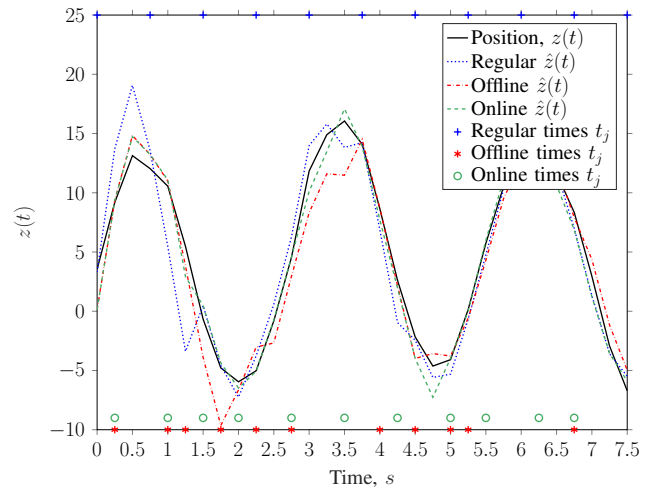}
    \caption{Comparison of a particular realization $z(t)$ with the filtered values $\zh(t)$ obtained with a particle filter with regular measurement times and both the offline and the online particle filters. The gains on the complete sequence are $G_{\text{offline}}=0.223$ and $G_{\text{online}}=0.616$. Results are simulated from the tumor motion model (Subsection \ref{sec:tumorModel}).}
    \label{fig:tumorModel:draw}
\end{figure}

\subsubsection{Results for the common benchmark (Subsection \ref{sec:toy_ex})}

Figure \ref{fig:histo_both:toy} represents the histogram of the offline and online gains $G$ for the common benchmark. The corresponding performance indicators are reported in Table \ref{tab:performanceIndicators:toy}. It is observed that both the offline and online methods outperform the particle filter with regular measurement times. In addition, the performance difference between the offline and online methods is important. The fact that the performance gap between the offline and online particle filters is larger for this common benchmark than for the tumor motion model can be related to Remark \ref{remark:kalman}. Indeed, if the model were linear and Gaussian, there would be no performance gap.

\begin{figure}[ht]
    \centering
    \includegraphics[width=8.5cm]{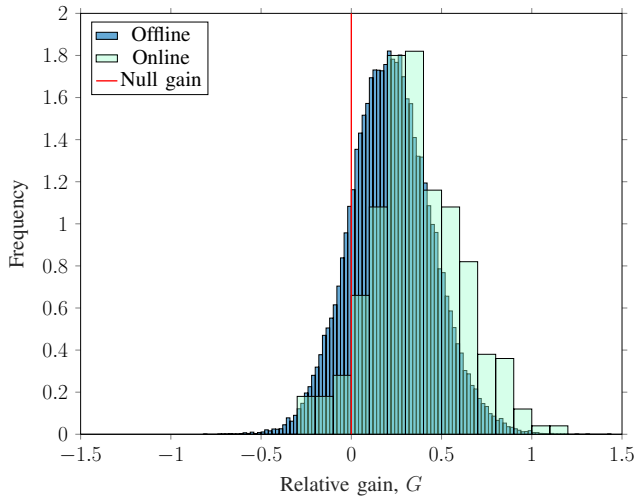}
    \caption{Histograms of the gain $G$ (see Equation (\ref{eq:gain})) for the common benchmark (Subsection \ref{sec:toy_ex}). The histograms have been obtained over 100000 and $500$ simulations for the offline and online particle filters, respectively.}
    \label{fig:histo_both:toy}
\end{figure}

\begin{table}[ht]
\centering
\begin{tabular}{|l||l|l|}
\hline
                         & Offline      & Online   \\ \hline\hline
Mean gain ($\pm$ std)    & 0.221 ($\pm$ 0.225) &  0.365 ($\pm$ 0.253) \\ \hline
Median gain              & 0.216     & 0.340   \\ \hline
Proportion positive gain & 84.3 \%     & 93.8 \%  \\ \hline
\end{tabular}%
\caption{Performance indicators for the offline and online solutions. The considered system, the performance indicators, and the optimization and simulation parameters are described in Subsections \ref{sec:toy_ex}, \ref{sec:perfIndicators}, and \ref{sec:optiAndSimParam}, respectively.}
\label{tab:performanceIndicators:toy}
\end{table}

Finally, Figure \ref{fig:toy:draw} shows a trajectory of the common benchmark as well as the tracking obtained by the regular, offline and online particle filters. The measurement times are also displayed.

\begin{figure}[ht]
    \centering
    \includegraphics[width=8.5cm]{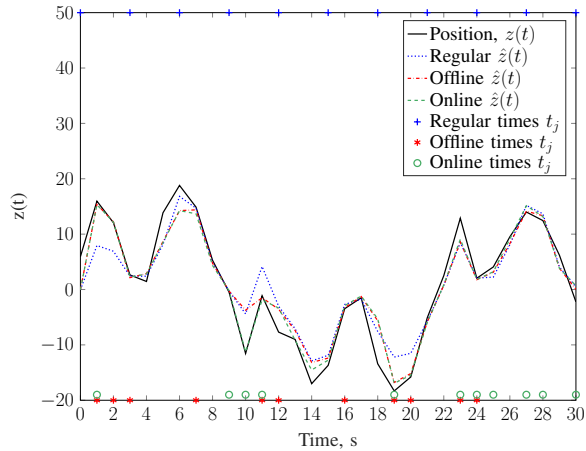}
     \caption{Comparison of a particular realization $z(t)$ with the filtered values $\zh(t)$ obtained with a particle filter with regular measurement times and both the offline and the online particle filters. The gains on the complete sequence are $G_{\text{offline}}=0.173$ and $G_{\text{online}}=0.283$. Results are simulated from the common benchmark (Subsection \ref{sec:toy_ex}).}
    \label{fig:toy:draw}
\end{figure}

\section{Conclusion}\label{sec:conclusion}
The problem of the optimal intermittent particle filter has been addressed. This consists in selecting the measurement times of the particle filter according to a certain optimality criterion.

We have proposed three variants of expected mean square error minimization giving rise to three different intermittent filters: the stochastic program filter, the offline filter, and the online filter.

A theorem has been proven that compares the performance of these three intermittent filters.

Then, different algorithms to compute the measurement times of the offline and online particle filters have been proposed: the random trial, greedy forward, greedy backward, simulated annealing, and genetic algorithms. These different optimization algorithms were compared on a tumor movement model inspired by real data. The genetic and simulated annealing algorithms are the two optimizers showing the best performance.

Finally, on this same tumor model, our offline and online particle filters were compared with a particle filter with regular measurement times. The offline and online particle filters outperform the regular particle filter in \AA{$69.3\%$} and \AA{$84.3\%$} of the cases, respectively. For a second example, the regular particle filter is outperformed in \AA{$76.3\%$} and \AA{$93.8\%$} of the cases by the offline and online particle filters, respectively.

Further work will focus on robustness analyses to determine how model uncertainties affect the performance of the proposed methods for situations in which the system dynamics are known only approximately. In addition, we will use a reinforcement learning approach to solve the stochastic program and test the stochastic program particle filter. Other continuations of this work will consist of experimental validation on real-world data, for example, in the case of mobile tumor tracking. It would also be interesting to investigate a continuous time version of the problem.

Overall, our results demonstrate the added value of selecting measurement times according to system dynamics for the optimal estimation of a state from limited measurements.




\appendix

\section*{Proof of Theorem \ref{thm:SP<online<offline}}
The structure of the proof consists in (i) observing that once all the measurements are known, cost-to-go is equal for the three filters; then (ii) we show that if the cost-to-go when choosing the measurement time $ t_ {j + 1} $ is smaller with the stochastic program filter than with the online filter, which itself is smaller than with the offline filter, then this is also the case when choosing the measurement time $ t_j $; and finally (iii) a backward recursive argument shows that this is also true for the initial cost-to-go.

This structure is applied to each inequality separately.

\paragraph*{Proof of the first inequality}
From Relations (\ref{eq:SP:terminalCond}) and (\ref{eq:online:terminalCond}) it follows that for all $t_{1:N}$ and $y(t_{1:N})$,
\begin{align}\label{eq:proof1:terminalCond}
V_{N}(t_{1:N};y(t_{1:N}))=F_{N}(t_{1:N};y(t_{1:N})).
\end{align}

For a fixed $j\in\{0,\dots,N-1\}$, assume that for all $t_{1:j+1}$ and $y(t_{1:j+1})$ it holds that
\begin{align}\label{eq:proof1:j+1}
V_{j+1}(t_{1:j+1};y(t_{1:j+1})) \leq F_{j+1}(t_{1:j+1};y(t_{1:j+1})).
\end{align}
Then, taking the expectation with respect to $y(t_{j+1})$ and adding a same term on both sides, we have,
\begin{align*}
&\sum_{t=t_j}^{t_{j+1}-1}\Ec[t|y(t_{1:j})] + \E_{y(t_{j+1})}\Bigg[ V_{j+1}(t_{1:j+1};y(t_{1:j+1})) \Bigg] \leq\\
&\sum_{t=t_j}^{t_{j+1}-1}\Ec[t|y(t_{1:j})] + \E_{y(t_{j+1})}\Bigg[ F_{j+1}(t_{1:j+1};y(t_{1:j+1})) \Bigg],
\end{align*}
for all $t_{1:j+1}$ and $y(t_{1:j})$. Taking the minimum over all $t_{j+1}$ on the left-hand side and fixing $t_{j+1}$ to $t_{j+1}^o$ on the right-hand side, we have,
\begin{align*}
&\min_{t_{j+1}}\Bigg\{\sum_{t=t_j}^{t_{j+1}-1}\Ec[t|y(t_{1:j})] \\
&+ \E_{y(t_{j+1})}\Bigg[ V_{j+1}(t_{1:j+1};y(t_{1:j+1})) \Bigg]\Bigg\} \leq\\
&\sum_{t=t_j}^{t_{j+1}^o-1}\Ec[t|y(t_{1:j}))] \\
&+ \E_{y(t_{j+1}^o}) \Bigg[ F_{j+1}(t_{1:j},t_{j+1}^o;y(t_{1:j}),y(t_{j+1}^o)) \Bigg],
\end{align*}
for all $t_{1:j}$ and $y(t_{1:j})$. From Relation (\ref{eq:SP:recursion}) on the left-hand side and Relation (\ref{eq:online:recursion}) on the right-hand side, we have
\begin{align}\label{eq:proof1:j}
V_{j}(t_{1:j};y(t_{1:j})) \leq F_{j}(t_{1:j};y(t_{1:j})),
\end{align}
for all $t_{1:j}$ and $y(t_{1:j})$.

But then, because Relation (\ref{eq:proof1:j+1}) implies (\ref{eq:proof1:j}) and because of (\ref{eq:proof1:terminalCond}), we have $V_0(\emptyset;\emptyset) \leq F_0(\emptyset;\emptyset)$.

\paragraph*{Proof of the second inequality}
It follows from a similar argument. We define
\begin{align}\label{eq:proof2:terminalCond}
J_{N}(t_{1:N};y(t_{1:N})):=F_{N}(t_{1:N};y(t_{1:N})).
\end{align}
Then, for a fixed $j\in\{0,\dots,N-1\}$, assume that for all $t_{1:j+1}$ and $y(t_{1:j+1})$ it holds that
\begin{align}\label{eq:proof2:j+1}
F_{j+1}(t_{1:j+1};y(t_{1:j+1})) \leq J_{j+1}(t_{1:j+1};y(t_{1:j+1})).
\end{align}
Rewriting the right-hand side according to (\ref{eq:def:J}) gives
\begin{align*}
&F_{j+1}(t_{1:j+1};y(t_{1:j+1}))\leq \\
&\min_{t_{j+2}<\cdots<t_N\leq T} \E_{y(t_{j+2:N})}\Bigg[ \sum_{t=t_{j+1}}^{T}\Ec[t|y(t_{1:N})] \Bigg].
\end{align*}
The right-hand side is upper bounded by
\begin{align*}
\E_{y(t_{j+2:N})}\Bigg[ \sum_{t=t_{j+1}}^{T}\Ec[t|y(t_{1:N})] \Bigg],
\end{align*}
for all $t_{j+2:N}$. Then one can write
\begin{align*}
&F_{j+1}(t_{1:j+1};y(t_{1:j+1}))\leq \\
&\E_{y(t_{j+2:N})}\Bigg[ \sum_{t=t_{j+1}}^{T}\Ec[t|y(t_{1:N})] \Bigg],
\end{align*}
for all $t_{1:N}$ and $y(t_{1:j+1})$. Taking the expectation with respect to $y(t_{j+1})$ and adding a same term on both sides gives
\begin{align}
&\sum_{t=t_j}^{t_{j+1}-1}\Ec[t|y(t_{1:j})] + \E_{y(t_{j+1})}\Bigg[ F_{j+1}(t_{1:j+1};y(t_{1:j+1})) \Bigg]\leq \nonumber\\
&\sum_{t=t_j}^{t_{j+1}-1}\Ec[t|y(t_{1:j})] +\E_{y(t_{j+1:N})}\Bigg[ \sum_{t=t_{j+1}}^{T}\Ec[t|y(t_{1:N})] \Bigg],\label{eq:proof:auxIneq}
\end{align}
for all $t_{1:N}$ and $y(t_{1:j})$. Using the fact that the first term of the right-hand side is independent of $y(t_{j+1:N})$, and then thanks to Remark \ref{remark:notation}, the right-hand side can be rewritten successively as
\begin{align*}
&\E_{y(t_{j+1:N})}\Bigg[\sum_{t=t_j}^{t_{j+1}-1}\Ec[t|y(t_{1:j})] + \sum_{t=t_{j+1}}^{T}\Ec[t|y(t_{1:N})] \Bigg]= \\
&\E_{y(t_{j+1:N})}\Bigg[\sum_{t=t_j}^{t_{j+1}-1}\Ec[t|y(t_{1:N})] +\sum_{t=t_{j+1}}^T \Ec[t|y(t_{1:N})]\Bigg].
\end{align*}
By combining the two sums on this right-hand side, the inequality (\ref{eq:proof:auxIneq}) becomes
\begin{align*}
&\sum_{t=t_j}^{t_{j+1}-1}\Ec[t|y(t_{1:j})] + \E_{y(t_{j+1})}\Bigg[ F_{j+1}(t_{1:j+1};y(t_{1:j+1})) \Bigg]\leq \\
&\E_{y(t_{j+1:N})}\Bigg[ \sum_{t=t_{j}}^{T}\Ec[t|y(t_{1:N})] \Bigg],
\end{align*}
for all $t_{1:N}$ and $y(t_{1:j})$. We set $t_{j+1:N}$ to the value that minimizes this right-hand side. In so doing, $t_{j+1}=t_{j+1}^o$ and the inequality becomes
\begin{align*}
&\sum_{t=t_j}^{t_{j+1}^o-1}\Ec[t|y(t_{1:j})] \\
&+ \E_{y(t_{j+1}^o)}\Bigg[ F_{j+1}(t_{1:j},t_{j+1}^o;y(t_{1:j}),y(t_{j+1}^o)) \Bigg]\leq \\
&\min_{t_{j+1}<\cdots<t_N\leq T}\E_{y(t_{j+1:N})}\Bigg[ \sum_{t=t_{j}}^{T}\Ec[t|y(t_{1:N})] \Bigg],
\end{align*}
for all $t_{1:j}$ and $y(t_{1:j})$. Now, using Relation (\ref{eq:online:recursion}) on the left-hand side and Relation (\ref{eq:def:J}) on the right-hand side, we obtain
\begin{align}\label{eq:proof2:j}
F_{j}(t_{1:j};y(t_{1:j})) \leq J_{j}(t_{1:j};y(t_{1:j})),
\end{align}
for all $t_{1:j}$ and $y(t_{1:j})$.

Finally, because Relation (\ref{eq:proof2:j+1}) implies Relation (\ref{eq:proof2:j}) and because of (\ref{eq:proof2:terminalCond}), we have $F_0(\emptyset;\emptyset)\leq J_0(\emptyset;\emptyset)$, which concludes the proof.

\bibliographystyle{IEEEtran}
\bibliography{refs}
\end{document}